\DeclareMathOperator*{\mindot}{min.}
\DeclareMathOperator*{\maxdot}{max.}
\DeclareMathOperator*{\argmin}{argmin}
\DeclareMathOperator*{\prox}{prox}
\newtheorem{prop}{Proposition}
\newtheorem{remark}{Remark}
\theoremstyle{definition}
\newtheorem{definition}{Definition}
\newcolumntype{C}[1]{>{\centering\let\newline\\\arraybackslash\hspace{0pt}}m{#1}}
\definecolor{dblue}{rgb}{0,0,0.8}
\renewcommand{\nomgroup}[1]{%
\ifthenelse{\equal{#1}{A}}{\item[\textbf{\textit{Acronyms:}}]}{\vspace{0.15cm}%
\ifthenelse{\equal{#1}{B}}{\item[\textbf{\textit{Sets:}}]}{\vspace{0cm}%
\ifthenelse{\equal{#1}{C}}{\item[\textbf{\textit{Constant Parameters:}}]}{\vspace{0cm}%
\ifthenelse{\equal{#1}{D}}{\item[\textbf{\textit{Transmission Network Variables:}}]}{\vspace{0cm}%
\ifthenelse{\equal{#1}{E}}{\item[\textbf{\textit{Distribution Network i Variables:}}]}{\vspace{0cm}%
\ifthenelse{\equal{#1}{F}}{\item[\textbf{\textit{Boundary Variables:}}]}{\vspace{0cm}%
\ifthenelse{\equal{#1}{G}}{\item[\textbf{\textit{Indices:}}]}{\vspace{0cm}%
}}}}}}}
}
\begin{document}
%
\title{Scalable Distributed Non-Convex ADMM-based Active Distribution System Service Restoration}
%
%
%

\author{Reza~Roofegari nejad,~\IEEEmembership{Student Member,~IEEE,}
        and~Wei~Sun,~\IEEEmembership{Member,~IEEE}
\thanks{R. Roofegari nejad, and W. Sun are with the Department of Electrical and Computer Engineering, University of Central Florida, Orlando, FL. 32816 USA. (e-mail: rezarn@ece.ucf.edu, sun@ucf.edu).}
}

\maketitle

\begin{abstract}
Distributed restoration can harness distributed energy resources (DER) to enhance the resilience of active distribution networks. However, the large number of decision variables, especially the binary decision variables of reconfiguration, bring challenges on developing effective distributed distribution service restoration (DDSR) strategies. This paper proposes a scalable distributed optimization method based on the alternating direction method of multipliers (ADMM) for non-convex mixed-integer optimization problems, and applies to develop the DDSR framework. The non-convex ADMM method consists of relax-drive-polish phases, 1) relaxing binary variables and applying the convex ADMM as a warm start; 2) driving the solutions toward Boolean values through a proximal operator; 3) fixing the obtained binary variables to polish continuous variables for a high-quality solution. Then, an autonomous clustering strategy together with consensus ADMM is developed to realize the distributed cluster-based framework of restoration. The non-convex ADMM-based DDSR can determine DER scheduling and switch status for reconfiguration and load pickup in a distributed manner, energizing the out-of-service area from local faults or total blackouts in large-scale distribution networks. The effectiveness and scalability of the proposed DDSR framework are demonstrated through testing on the IEEE 123-node and IEEE 8500-node test feeders.
\end{abstract}

\begin{IEEEkeywords}
Alternating direction method of multipliers (ADMM), autonomous clustering, distributed service restoration, distributed energy resources, non-convex, reconfiguration
\end{IEEEkeywords}

\nomenclature[A,13]{TN}{Transmission Network}
\nomenclature[A,05]{CB}{Capacitor Bank}
\nomenclature[A,05]{DN}{Distribution Network}
\nomenclature[A,08]{DSR}{Distribution System Restoration}
\nomenclature[A,08]{DDSR}{Distributed Distribution System Restoration}
\nomenclature[A,11]{NC-ADMM}{Non-Convex ADMM}
\nomenclature[A,02]{ADN}{Active Distribution Networks}
\nomenclature[A,03]{DER}{Distributed Energy Resources}
\nomenclature[A,04]{DG}{Distributed Generator}
\nomenclature[A,10]{MT}{Microturbine}
\nomenclature[A,11]{PV}{Photovoltaic}
\nomenclature[A,12]{SLA}{Smart Local Agent}
\nomenclature[A,17]{VR}{Voltage Regulator \vspace{0.05cm}}
\nomenclature[A,01]{ADMM}{Alternating Direction Method of Multipliers}

\nomenclature[B,01]{$\mathcal{T}$}{Set of restoration time steps}
\nomenclature[B,02]{$\mathcal{N}, \mathcal{N}^{\text{sub}}, \mathcal{K}$}{Set of all nodes, substations, and clusters}
\nomenclature[B,02]{$\mathcal{L}, \mathcal{E}, \mathcal{G}$}{Set of loads, lines, and DERs}
\nomenclature[B,03]{$\mathcal{E}^S, \mathcal{E}^F$}{Set of switchable and faulted lines}
\nomenclature[B,03]{$\phi$}{Set of three phases as $\{a, b, c\}$}
\nomenclature[B,04]{$\pi(i)$}{Set of parent nodes of node $i$}
\nomenclature[B,05]{$\delta(i)$}{Set of children nodes of node $i$}



%
\IEEEpeerreviewmaketitle

\section{Introduction}
%
%
%
%
\IEEEPARstart{S}{mart} grid technologies have been applied to enhance the resilience of distribution networks (DN); however, faults and outages are still inevitable due to challenges from natural disasters or man-made attacks \cite{Bie2017Battling}. It is critical to effectively respond to extreme events and optimally restore the electric service \cite{Yao2020Rolling}. For example, after an outage, microgrids can be utilized to supply critical loads \cite{Xu2018Microgrids} or maximize restored loads \cite{Wang2015Self-Healing}; distributed energy resources (DER) can be harnessed to energize islanded microgrids \cite{Chen2018Sequential}; and advanced communication and control devices, such as remote-controlled switches, provide great potentials for advanced restoration strategies \cite{Pathirikkat2018Protection}. \par

Distribution service restoration (DSR) aims to restore maximum out-of-service loads through finding available paths by switching operation and picking up loads after a blackout \cite{Roofegari2019Distributed}. Large-scale DNs and the increasing penetration of DERs make DSR one of the most complicated and challenging problems of active distribution networks (ADN) \cite{Marques2018Service}. The prevailing centralized restoration scheme requires a powerful centralized controller to communicate with components, collect data, carry out large-scale sophisticated computations, and send out commands \cite{Li2020Decentralized}. However, centralized infrastructures are costly, might suffer from single-point failures, and are limited by information privacy of entities \cite{Roofegari2019Distributed}. \par

Distributed optimization and control is a promising solution to address the aforementioned issues, but most of the methods require convexity to guarantee the convergence, which is difficult to meet due to the mixed-integer characteristic of many power system problems including DSR \cite{Roofegari2019Distributed}. For example, the multi-agent based distributed service restoration scheme has been developed to address the restoration problem \cite{Elmitwally2015Fuzzy, Abel2018Decentralized, Li2020Decentralized}. However, these methods require agents to access all information against data privacy, and usually neglect the scalable problem formulation for large-scale DNs. Authors in \cite{Shen2019Distributed} proposed a distributed restoration framework based on the alternating direction method of multipliers (ADMM). However, simply projected ADMM to deal with binary variables usually obtains infeasible solutions \cite{Boyd2011Distributed}, and the proposed clustering structure can not include switches within clusters.\par

The ADMM method was originally developed for convex optimization problems, but it turns out to be a powerful heuristic method even for non-convex (NC) problems \cite{Takapoui2017simple}. Recently, it has gained a lot of attention to find the approximate solution for NP-hard problems \cite{Diamond2018general}. Although there is no guarantee for ADMM-based methods converge to a global solution for NC problems, it is an attempt to find a high-quality local solution, considering the difference between global and local optimal solutions in practice may not be significant\cite{Diamond2018general}. Accordingly, \cite{Boyd2011Distributed} and \cite{Diamond2018general} proposed a heuristic projection method based on ADMM for general NC problems, which usually yields infeasible or low-quality local solutions. Therefore, in this paper, a distributed optimization method based on ADMM is proposed to achieve a high-quality solution for mixed-integer optimization problems. Different from \cite{FEIZOLLAHI2015Large-scale} that locks binary variables during convergence, the proposed method tries to drive them simultaneously toward Boolean values. The developed NC-ADMM method, together with an autonomous clustering scheme, is applied to the DSR problem. \par

The main contributions of this paper are threefold: (1) Developed a cluster-based distributed distribution service restoration (DDSR) framework including reconfiguration and load pickup for unbalanced large-scale distribution networks, based on the proposed heuristic NC-ADMM algorithm. The DDSR framework establishes clusters and solves the DSR problem through solving small subproblems for each smart local agent (SLA) and exchanging limited information between adjacent clusters. (2) Developed a heuristic distributed approach based on ADMM through relax-drive-polish phases for NC mixed-integer problems, and applied to the DDSR problem for a high-quality suboptimal solution. (3) Developed an autonomous two-stage clustering strategy to enhance the scalability of NC-ADMM-based DDSR for large-scale distribution networks. \par

The remainder of the paper is as follows. Sections II and III introduce the distributed scheme for DSR problem and DSR problem formulation, respectively. Section IV presents the proposed autonomous clustering strategy for large-scale distribution networks. Section V introduces the proposed NC-ADMM method with the application on the DDSR problem. Finally, section VI demonstrates numerical results and analysis, and section VII concludes the paper. \par

\section{DDSR Framework}\label{DDSR Scheme}
DSR can be categorized into emergency service restoration and blackout service restoration \cite{Roofegari2019Distributed}. The emergency service restoration aims to isolate faulted areas by opening sectionalizing switches and to energize unfaulted out-of-service areas by closing tie-switches. For blackout service restoration, the entire DN is out of service, and the restoration can be accomplished through multiple time steps, depending on the bulk power system restoration procedure. In this paper, the proposed DDSR can handle both top-down and bottom-up restoration strategies, aiming to promptly restore as much load as possible in areas where electric service is disrupted.\par

The cluster-based DDSR framework is shown in Fig. \ref{Fig. DDSR Framework}. The distribution network is divided into multiple agents, and these agents can be realized by any node, DER, or smart switch. Rather than controlling each node independently as in our previous work \cite{Roofegari2019Distributed}, which requires many intelligent entities, a cluster of nodes are controlled by assigning an SLA. Each SLA monitors, controls, and dispatches all the loads, DERs, switches, and capacitor banks (CB) within its cluster, and also exchanges data with neighboring clusters through a two-way communication network. Neighboring clusters are those connected through power delivery elements such as distribution lines. The cluster-based structure improves the scalability and convergence speed of the proposed distributed algorithm and also enables the practical implementation of distributed communication and control. It is worth mentioning that each SLA is responsible for its own territory and the boundaries of clusters are artificial. Moreover, the boundary elements could be controlled by all related SLAs when the proposed algorithm reaches consensus on the operation.\par
\begin{figure}
\centering
	\includegraphics[width=3.6in]{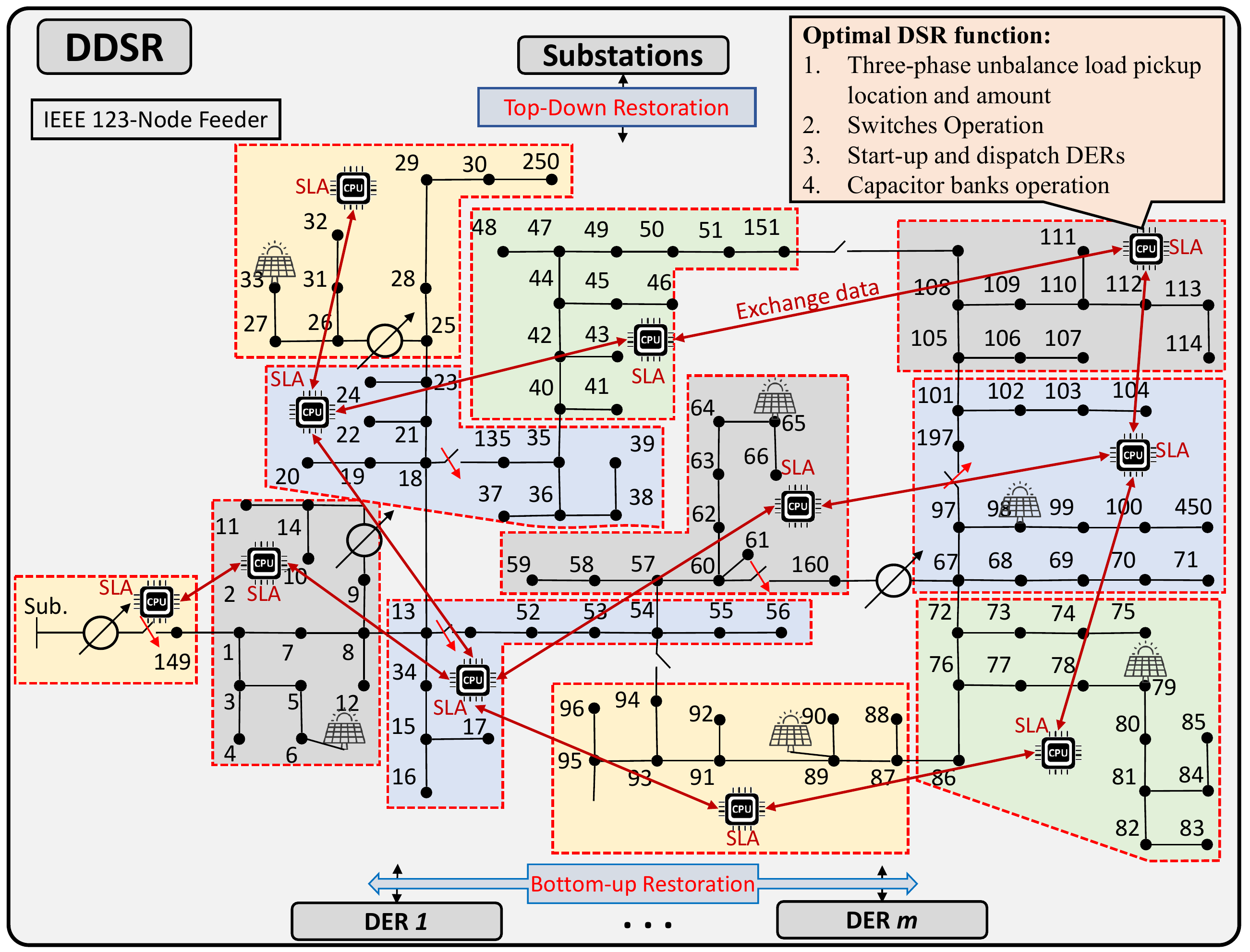}
	\caption{The proposed DDSR framework.}
	\label{Fig. DDSR Framework}
\end{figure}
The proposed heuristic consensus NC-ADMM method for DDSR can be realized through solving a small subproblem by each SLA and exchanging limited information among neighboring clusters as shown in Fig. \ref{Fig. DDSR Framework}. Each subproblem includes various constraints, such as power flow and security constraints, to guarantee the solution feasibility for the restoration procedure. When faults occur, the protection system detects and isolates faults thorough sectionalizing switches. Then, depending on the type of event, the restoration procedure starts and might last from one to several time steps, each of which has enough duration for operation and stabilization of all involved components. In the case of a bulk system outage, the steps are imposed by the generation capability curve in transmission system, i.e., available power at each distribution substation. Furthermore, the restoration might include the energization of unfaulted out-of-service areas through closing tie-switches. Due to the protection system limitation, the tree topology of the entire DN must be preserved during restoration, which is a challenging constraint for DDSR \cite{Taylor2012convex}.\par

Accordingly, the optimal restoration plan of DDSR can be achieved in the following iterative steps:
\subsubsection{DSR subproblems step} The local DSR subproblems are solved by each SLA for the entire restoration span. A decomposable Lagrange function is developed for the whole restoration problem and then decomposed into subproblems for each agent. These subproblems can be solved independently by each SLA while consensus variables are being fixed in this step. Then, each agent determines its load pickup, switching, generation dispatch, CB operation, voltage, and power flow for all loads and power delivery elements including inner and boundary ones. It is worth mentioning that in this step all binary variables are relaxed and confined into a convex hull; therefore, all subproblems are convex optimization problems. Also, the Boolean value for these variables is achieved through the relax-drive-polish phases of the NC-ADMM method. 
\subsubsection{Consensus step} All neighboring agents exchange information related to the voltage, power flow, lines energization, and radiality constraints, and update the consensus variables.    
\subsubsection{Lagrange multipliers update step} The Lagrange multipliers related to the boundary and binary variables are updated by each SLA based on results from the previous two steps and exchanged data. Updating these multipliers enforces general constraints for the entire distribution network, including supplying critical loads in other clusters, radial topology, and total power balance.

The iterative procedure of DDSR continues until satisfying convergence criteria or time limits. Then, the algorithm provides a sequence of control actions for each cluster to restore its loads based on their priority, and operate switches and various components in the network within each time step of the total restoration span. \par

\section{DSR Problem Formulation}\label{PROBLEM FORMULATION}
\subsection{Model Respresentation}
This section provides DSR formulation for an unbalanced DN consisting of $N$ nodes denoted by set $\mathcal{N}:=\{1,...,N\}$. Also, $\mathcal{L}$, $\mathcal{E}$, and $\mathcal{G}$ show the set of loads, lines, and DERs, respectively. Superscripts $S$ and $F$ represent subsets of switchable lines and faulted lines, as $\mathcal{E}^S, \: \mathcal{E}^F \subseteq \mathcal{E}$; and superscripts $d$ and $nd$ stand for subsets of dispatchable and non-dispatchable loads as $\mathcal{L}^d \cup \mathcal{L}^{nd}= \mathcal{L}$. Furthermore, $\mathcal{T}$ denotes the set of restoration time steps as $t \in \mathcal{T}$, and $\phi$ stands for three phases of $\{a, b, c\}$. The inner product of two vectors $\vec{x},\vec{y}$ is denoted by $\langle \vec{x},\vec{y}  \rangle$, and the element-wise product and division are denoted by $\odot$ and $\oslash$ respectively.\par

Three-phase restored load $i$ at time $t$ is represented by $\vec{P}_{i,t}^{L}+j\vec{Q}_{i,t}^{L} \in \mathbb{R}^{3\times 1}$. Let $\vec{P}_{i,t}^{\text{G}}+j \vec{Q}_{i,t}^{\text{G}} \in \mathbb{R}^{3\times 1}$ denote three-phase power generation of DER$_i$, and $\vec{P}_{ij,t}+j\vec{Q}_{ij,t} \in \mathbb{R}^{3\times 1}$ and $\textbf{\textit{r}}_{ij}+j\textbf{\textit{x}}_{ij}=\textbf{\textit{z}}_{ij} \in \mathbb{R}^{3\times 3}$ denote the power flow and impedance of distribution line $ij$ between nodes $i$ and $j$. In order to capture the energization status of load $i$ and bus $i$, and the connection status of line $ij$ at time $t$, a set of binary variables are defined as $x_{i,t}^L$, $x_{i,t}^B$, and $\alpha_{ij,t}$. Nevertheless, to simplify the DSR problem formulation and solution algorithm, binary variables $\alpha_{ij}$ are only assigned for switchable lines. \par
\vspace{-0.2cm}
\subsection{Objective Function}
The DSR problem is modeled as a mixed-integer convex optimization problem. A multi-objective function is developed to maximize total restored loads at all time steps considering load priorities, as well as the optimal operation of switches. Binary variables of switch status are prioritized so that the normally-closed switches have a higher priority compared to the normally-open ones. The motivation is to keep network topology close to the one under the normal operation. These two terms are regularized through $c_1$ and $c_2$ as below:  
\begin{equation}\label{equ. objective function}
       \maxdot \sum_{t \in \mathcal{T}} \; (c_1 \sum_{i \in \mathcal{L}} \left \langle \vec{w}_i^L , \vec{P}_{i,t}^{L}  \right \rangle  + \; \; c_2 \sum_{(i,j) \in \mathcal{E}^{S}} w_{ij}^{S} \; \alpha_{ij,t}) 
\end{equation}
where $\vec{w}^L_i$ and $w_{ij}^S$ denote priority of load $i$ and switchable line $ij$.
\vspace{-0.3cm}
\subsection{Constraints}
\subsubsection{Security constraints}
Security constraints guarantees the operational limits of various elements in DN. Constraints \eqref{equ. DSR non-dispatchable load pickup} and \eqref{equ. DSR dispatchable load pickup} represent the load pickup capacity of non-dispatchable and dispatchable loads. Nodal voltage and branch power flow should be within the limit for energized nodes and lines, as in \eqref{equ. DSR voltage limit} and \eqref{equ. DSR branch power capacity2} where $\vec{U}_{i,t}= |\vec{v}_{i,t}|^{\odot2}$ is the square of voltage magnitude at node $i$. Constraint \eqref{equ. DSR substation capacity} demonstrates the restorative capacity of transmission network ($P_t^{\text{sub}}+jQ_t^{\text{sub}}$), as $P_{1j,t}^{\phi}+jQ_{1j,t}^{\phi}$ is the total three-phase power flow of the connected line to the substation. The output of energized CB is limited by its maximum capacity in \eqref{equ. DSR Capacitor bank}. A convex quadratic constraint is used to limit the output power of inverter-based photovoltaic (PV) generators in \eqref{equ. DSR PV generation limitation} which are the only type of DERs modeled in this paper. Finally, Voltage regulators (VR) are assumed to be wye-connected type B, which the voltage between primary and secondary sides are represented by \eqref{equ. DSR Voltage regulator}. This equation can be linearized based on \cite{Robbins2016Optimal}; however, for simplicity, VRs are assumed to be constant during restoration.
\begin{equation}\label{equ. DSR non-dispatchable load pickup}
\vec{P}_{i,t}^{L}=x_{i,t}^{L} \; \vec{P}_{i}^{L,\text{max}}, \quad \vec{Q}_{i,t}^{L}=x_{i,t}^{L} \; \vec{Q}_{i}^{L, \text{max}}, \; \forall i \in \mathcal{L}^{nd}
\end{equation}
\begin{equation}\label{equ. DSR dispatchable load pickup}
0 \leq \vec{P}_{i,t}^{L}\leq \vec{P}_{i}^{L, \text{max}}, \quad  0 \leq \vec{Q}_{i,t}^{L}\leq \vec{Q}_{i}^{L, \text{max}}, \; \forall i \in \mathcal{L}^{d}
\end{equation}
\begin{equation}\label{equ. DSR voltage limit}
x_{i,t}^B{(V^{min})}^2\leq \vec{U}_{i,t}(ph)\leq x_{i,t}^B{(V^{max})}^2, \; \forall ph \in \phi
\end{equation}
\begin{equation}\label{equ. DSR branch power capacity2}
(\vec{P}_{ij,t})^{\odot2}+(\vec{Q}_{ij,t})^{\odot2}\leq \alpha_{ij,t} \; (\vec{S}_{ij}^{\text{max}})^{\odot2}
\end{equation}
\begin{equation}\label{equ. DSR substation capacity}
0 \leq P_{1j,t}^{\phi} \leq P_{t}^{\text{sub}} , \quad 
0 \leq Q_{1j,t}^{\phi} \leq Q_{t}^{\text{sub}}
\end{equation}
\begin{equation}\label{equ. DSR Capacitor bank}
0 \leq \vec{Q}_{i,t}^{\text{cap}}\leq x_{i,t}^B \; \vec{Q}_{i}^{\text{cap, max}}
\end{equation}
\begin{equation}\label{equ. DSR PV generation limitation}
(\vec{P}_{i,t}^{G})^{\odot2}+(\vec{Q}_{i,t}^{G})^{\odot2}  \leq x_{i,t}^B \; (\vec{S}_{i}^{\text{Inv, max}})^{\odot2}
\end{equation}
\begin{equation}\label{equ. DSR Voltage regulator}
\vec{U}_{i,t}=\vec{a}^{\odot2} \odot \vec{U}_{j,t}\, ; \quad \vec{a}=1+ {R_i}\% \cdot (\vec{n}_{i,t}^{\text{tap}}\oslash\overline{n}_{i}^{\text{tap}})
\end{equation}
\subsubsection{Power flow constraints}
Power flow equations are based on the branch flow model for three phase unbalanced DNs in \cite{Roofegari2019Distributed}. For simplicity, these equations are linearized by removing the quadratic loss term \cite{Robbins2016Optimal}. The power flow equations include the linearized voltage drop for node $i$ in \eqref{equ. DSR voltage drop}, and active and reactive power balance in \eqref{equ. Active power balance} and \eqref{equ. Reactive power balance}, where $\widetilde{\textbf{\textit{r}}_{ji}}$ and $\widetilde{\textbf{\textit{x}}_{ji}}$ are unbalanced line impedance matrices, as referred in \cite{Roofegari2019Distributed}, and $j$ and $\delta(i)$ are parent node and set of children nodes of node $i$. For switchable lines, the big $M$ method can be applied on equation \eqref{equ. DSR voltage drop} to enable it based on the connectivity. 
\begin{subequations}\label{equ. DSR Power flow equations}
\begin{equation} \label{equ. DSR voltage drop}
\vec{U}_{i}=\vec{U}_{j}-2\left ( \widetilde{\textbf{\textit{r}}_{ji}} \vec{P}_{ji}+\widetilde{\textbf{\textit{x}}_{ji}} \vec{Q}_{ji} \right ) \; \forall(j,i) \in \mathcal{E}  \setminus \mathcal{E}^S 
\end{equation}
\vspace{-0.3cm}
{\begin{equation}\label{equ. Active power balance}
\vec{P}_{ji,t}=  \vec{P}_{i,t}^{{L}}+\sum_{m \in \delta(i)}\vec{P}_{im,t}-\vec{P}_{i,t}^{{G}} 
\end{equation}}
\vspace{-0.25cm}
\begin{equation}\label{equ. Reactive power balance}
\vec{Q}_{ji,t}= \vec{Q}_{i,t}^{{L}}+\sum_{m \in \delta(i)}\vec{Q}_{im,t}-\vec{Q}_{i,t}^{{G}}-\vec{Q}_{i,t}^{\text{cap}} 
\end{equation}
\end{subequations}

\subsubsection{Topological, connectivity and sequencing constraints}
These constraints guarantees radial operation, isolation of fault, and sequential restoration of DNs. Load shedding is prohibited by \eqref{equ. DSR sequencing load} for energized loads during restoration. If a fault happens, the protection system can locate and isolate the faulted area through sectionalizing switches. Then, by forcing the related binary variables of faulted lines to be zero as \eqref{equ. DSR fault isolation}, the faulted area remains isolated during the DSR procedure. The radial configuration of DN is guaranteed through spanning tree constraints \eqref{equ. Spanning Tree constraints} \cite{Taylor2012convex}. Two auxiliary binary variables $\beta_{ij}$ and $\beta_{ji}$ are associated with each line $ij$, denoting the direction of flow if any. Then, equations \eqref{equ. DSR parent-child line} and \eqref{equ. DSR parent-child switchable line} indicate that for each line $ij$, either node $j$ is the parent of node $i$ ($\beta_{ij}=1$), or node $i$ is the parent of node $j$ ($\beta_{ji}=1$). Also, equation \eqref{equ. DSR EachNode one parent} requires every node other than substation has exactly one parent node, while substation does not have any parent.
\begin{equation}\label{equ. DSR sequencing load}
\vec{P}_{i,t}^{L} \geq \vec{P}_{i,t-1}^{L}, \quad \forall i \in \mathcal{L}
\end{equation}
\begin{equation}\label{equ. DSR fault isolation}
    \alpha_{ij,t}=0,\; x_{i,t}^B=0, \; x_{j,t}^B=0 \; \; (i,j)\in \mathcal{E}^F
\end{equation}{}
\begin{subequations}\label{equ. Spanning Tree constraints} {\begin{equation}\label{equ. DSR parent-child line}
\beta_{ij,t}+\beta_{ji,t} = 1, \; \; \forall (i,j) \in \mathcal{E} \setminus \{ \mathcal{E}^{F} \cup \mathcal{E}^{S} \}
\end{equation}}
\vspace{-0.25cm}
{\begin{equation}\label{equ. DSR parent-child switchable line}
\beta_{ij,t}+\beta_{ji,t} = \alpha_{ij,t}, \; \; \forall (i,j) \in  \mathcal{E}^S
\end{equation}}
\vspace{-0.25cm}
{\begin{equation}\label{equ. DSR EachNode one parent}
 \sum_{\forall i \in \mathcal{N}} \beta_{ij,t} \leq 1,  \; \forall (i,j)  \in \mathcal{E}, \quad \beta_{ij,t} = 0,  \;  j \in {\mathcal{N}}^{sub} 
 \end{equation}}
\end{subequations}

\section{Autonomous Clustering Algorithm}\label{Section Clustering}
A two-level autonomous clustering strategy is developed to maximize the convergence speed and minimize the information exchange. The first level is to identify the optimal number and size of clusters, and the second level is to detect and generate clusters within a DN. This self-organizing strategy is applicable for various communication typologies, different types of information sharing, and a large number of nodes. 
\begin{prop}
The optimal number of clusters in a DN to maximize the convergence speed and minimize the information exchange is achieved by square root of nodes. 
\end{prop}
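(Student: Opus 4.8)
The plan is to model the total restoration runtime as a function of the number of clusters $K$ and to show it is minimized at $K \propto \sqrt{N}$. First I would note that the per-iteration cost of the distributed scheme splits into two competing pieces: a \emph{local computation} cost, incurred when each SLA solves its own DSR subproblem, and a \emph{communication} cost, incurred when neighboring clusters exchange boundary and consensus data. Balancing these is precisely the tension between convergence speed and information exchange stated in the proposition.

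For the computation term, partitioning the $N$ nodes into $K$ balanced clusters leaves each SLA a subproblem over roughly $N/K$ nodes. Since the subproblems are solved in parallel, the per-iteration computation time is governed by the largest cluster, which I model as $t_{\text{comp}}(K) = c_a\, N/K$, with $c_a$ absorbing the (assumed fixed-order) solver cost per node; this term decreases in $K$, since more and smaller clusters compute faster. For the communication term I would exploit the radial structure of the network: because the operating topology is a tree (as enforced by the spanning-tree constraints of Section~\ref{PROBLEM FORMULATION}), partitioning a tree on $N$ nodes into $K$ connected clusters severs exactly $K-1$ edges. Each severed edge is a boundary interface requiring consensus exchange, so the per-iteration communication cost is $t_{\text{comm}}(K) = c_b\,(K-1) \approx c_b\,K$, which increases in $K$.

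Assuming the iteration count is, to leading order, insensitive to $K$, the total runtime is $T(K) = c_a\,N/K + c_b\,K$, a strictly convex function on $K>0$. Setting $T'(K) = -c_a N/K^2 + c_b = 0$ gives the unique stationary point $K^\star = \sqrt{(c_a/c_b)\,N}$, and $T''(K) = 2 c_a N/K^3 > 0$ confirms it is the global minimizer. Hence $K^\star \propto \sqrt{N}$, and rounding to the nearest feasible integer yields the claimed square-root rule.

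The main obstacle is not the optimization, which is a short calculus step, but justifying the two scaling laws. The communication law rests solidly on the tree structure, whereas the computation law leans on the assumptions of balanced cluster sizes and a per-node solve cost of fixed order; for an unbalanced partition the per-iteration solver time could scale super-linearly in the largest cluster, altering the constant but not the $\sqrt{N}$ order. I would therefore make these assumptions explicit and remark that any positive constants $c_a, c_b$ only rescale the proportionality factor, leaving the $\sqrt{N}$ scaling intact.
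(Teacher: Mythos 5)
Your argument is essentially the paper's own: both reduce the trade-off to minimizing a function of the form $aK + bN/K$ (the paper writes it as $\lambda_1 k + \lambda_2 m$ subject to $km=N$ with $\lambda_1=\lambda_2=1$) and obtain $K^\star=\sqrt{N}$ by elementary calculus. The only difference is that you explicitly justify the two linear scaling laws (per-cluster subproblem size $N/K$ and the $K-1$ severed tree edges) and note that the constants $c_a, c_b$ shift the optimum to $\sqrt{(c_a/c_b)N}$, a point the paper sidesteps by fixing the weights to one.
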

\begin{proof}
Considering a DN with $N$ nodes divided into $k$ clusters each with $m$ nodes, an optimization problem is formulated with the objective function of minimizing $\lambda_1 k+\lambda_2 m$ subject to $km=N$, where $\lambda_1$ and $\lambda_2$ are weighting factors for a trade-off between communication requirement and computational burden of each cluster for the convergence speed. Substituting the constraint as $m=N/k$ and assuming $\lambda_1=\lambda_2=1$, yields the problem as minimizing $k+N/k$. Then, by taking the derivative, the objective value obtains as $k=\sqrt{N}$ and subsequently $m=\sqrt{N}$.
\end{proof}
In second level, a bottom-up travers method is developed for DNs with tree structure. This method starts from the leaves of the tree and ends by the root node, as provided in Algorithm \ref{Alg. Bottom-up travers}. First, the algorithm assigns a weight to each node demonstrating its subtree size. Next, the algorithm finds the nodes with the weight close to the ideal number and detaches them as a cluster. This procedure is repeated until all nodes are removed from the tree structure and been divided into clusters. \par
\begin{algorithm}[]
 \caption{Bottom-up travers clustering algorithm}
 \begin{flushleft}
 \hspace*{\algorithmicindent} \textbf{Input:} $T$: The network tree, $N$: number of nodes, \\ \hspace{0.5cm}$k$: desired number of clusters, $r$: relaxation factor; \\
 \hspace*{\algorithmicindent} \textbf{Output:} $T\textprime$: Clustered network tree.
 \end{flushleft}
\begin{algorithmic}[1]
 \STATE \textbf{Calculate} the number of nodes in each cluster ($m$)
 \STATE \textbf{Traverse} the tree in the bottom-up postorder manner
 \STATE \textbf{for} each node $v$  \textbf{Assign} its subtree size
 \STATE \textbf{Find} all nodes $v\textprime$ with subtree size $=m \pm r$
 \STATE \textbf{Cluster} each node $v\textprime$ with its subtree
 \STATE \textbf{Remove} all clustered nodes
 \STATE \textbf{Repeat} Algorithm \ref{Alg. Bottom-up travers} until all nodes been clustered
\end{algorithmic}
 \label{Alg. Bottom-up travers}
\end{algorithm}
\vspace{-0.3cm}
\begin{remark}
Finding $\sqrt{N}$ clusters with $\sqrt{N}$ nodes is a hard problem and might be impossible due to the topology of the network. Accordingly, the proposed algorithm provides a high quality solution even for large-scale DNs, in a swift manner. 
\end{remark}

\vspace{-0.5cm}
\section{Distributed Solution Methodology}\label{Distributed Algorithm DDSR}
\subsection{Proposed Distributed Algorithm of Non-convex ADMM} \label{Distributed Algorithm of Non-convex ADMM}
This paper developed a heuristic approach, including the \textit{relax-drive-polish} procedure, to enhance the performance of heuristic ADMM for non-convex problems. It solves the relaxed convex problems in each iteration of the ADMM, and assigns auxiliary variables to force relaxed binary variables toward Boolean values during convergence. It also combines with the consensus ADMM \cite{Boyd2011Distributed}, which provides a parallel computational framework for all agents to reach consensus on binding variables through limited information exchange.\par
\begin{definition}\label{Def. binding variables}
Binding variables are those involved in binding constraints among agents (SLA) in a decomposed problem.
\end{definition}

The optimization problems of DSR with decomposable objective function for $\mathcal{K}$ clusters can be generalized as following:
\begin{equation}\label{equ. relaxed general optimization problem}
    \begin{aligned}
    \mindot_{\vec{x}, \vec{z}} &\: \sum_{i\in \mathcal{K}} f_i(\vec{x}_i,\vec{z}_i)\\
    s.t. & \; \; \vec{x}, \vec{z} \in C; \; \vec{z} \in S
    \end{aligned}
\end{equation}
where $\vec{x}\in \mathbb{R}^n$ and $\vec{z} \in \mathbb{R}^q$ are decision variables with a convex objective function, and inequality and equality constraints define a convex set $C$ for $\vec{x}$ and $\vec{z}$. Respectively, decision variables can be classified for each cluster $i$ as $[\vec{x}_i,\vec{z}_i]$. Based on Definition \ref{Def. binding variables}, these variables can be categorized into binding and interior variables as $[\vec{x}_i^{Bi},\vec{x}_i^{In},\vec{z}_i^{Bi},\vec{z}_i^{In}]$. Furthermore, $S$ represents the non-convex set, which for DSR problem is Boolean set as $S=\{0,1\}^q$. The relaxation of Boolean constraints is achieved through introducing auxiliary variables of $\vec{y}$ within a convex hull and making them equal to $\vec{z}$ as a consensus constraint $\vec{y}=\vec{z}, \; 0 \leq \vec{y} \leq 1$.\par
Through introducing augmented Lagrangian over consensus constraint in \eqref{equ. relaxed general optimization problem}, the ADMM solution procedure can be formed as \eqref{equ. generalized projection Non-convex ADMM} where $\vec{\mathfrak{u}}$ is the Lagrange multiplier for ADMM in scaled form. However, the convergence procedure is unsmooth and the results are usually infeasible. Therefore, a non-convex ADMM method is developed next.
\begin{subequations}\label{equ. generalized projection Non-convex ADMM}
 \begin{align}{}
 \begin{split}
 (\vec{x},\vec{y})^{(k+1)} &:= \argmin_{\vec{x},\vec{y} \in C, 0\leq\vec{y}\leq1} \bigg( \sum_{i \in \mathcal{K}}f_i(\vec{x}_i,\vec{y}_i)+\\&
 \hspace{2cm} \frac{\rho}{2} \|\vec{y}-{\vec{z}}^{(k)}+{\vec{\mathfrak{u}}}^{(k)}\|_2^2\bigg)
 \label{equ. step 1 generalized projection Non-convex ADMM}
 \end{split}{}\\
 {\vec{z}}^{(k+1)} &:= \Pi_{S} ({\vec{y}}^{(k+1)}+{\vec{\mathfrak{u}}}^{(k)}) \label{equ. step 2 generalized projection Non-convex ADMM} \\
 {\vec{\mathfrak{u}}}^{(k+1)} &:= {\vec{\mathfrak{u}}}^{(k)} + {\vec{y}}^{(k+1)} - {\vec{z}}^{(k+1)} \label{equ. step 3 generalized projection Non-convex ADMM}
 \end{align}
\end{subequations}{}
\begin{remark}
The $\Pi_S$ stands for the projection over set $S=\{0,1\}^q$ which is given by rounding the entries to $0$ or $1$.
\end{remark}

\subsubsection{Step 1-Relax} 
Inspired by the Douglas-Rachford splitting method \cite{Eckstein1989Splitting}, instead of projection in \eqref{equ. step 2 generalized projection Non-convex ADMM}, the proximal operator, as defined in \eqref{equ. Proximal operator definition}, is applied to drive $\vec{z}$ toward Boolean values as \eqref{equ. prox ADMM binary update}, where $\tilde{t}$ is the regularization factor. In order to drive $\vec{z}$ toward the Boolean values, it is proposed to establish $I(\vec{z}):=\frac{1}{2}\|\vec{z}-\Pi_S(\vec{w})\|^2_2$.
\begin{equation} \label{equ. Proximal operator definition}
    \prox_{\tilde{t}, I(\vec{z})}(\vec{w}) := \argmin_{\vec{z}} \; \frac{1}{2\tilde{t}}\|\vec{z}-\vec{w}\|^2_2+I(\vec{z})
\end{equation}
\begin{equation}\label{equ. prox ADMM binary update}
\vec{z}^{(k+1)} := \prox_{\tilde{t},I(\vec{z})} \bigg(\vec{y}^{(k+1)}+\vec{\mathfrak{u}}^{(k)} \bigg):=\prox_{\tilde{t}, I(\vec{z})}(\vec{w})
\end{equation}
Then, the optimization problem of \eqref{equ. Proximal operator definition} is a compromise to choose $\vec{z}$ between the Boolean value and the consensus value of $\vec{w}$. The closed-form solution of \eqref{equ. Proximal operator definition} is obtained as \eqref{equ. closed form solution for z}. When $\tilde{t}=0$, $\vec{z}=\vec{w}$ as the consensus value; when $\tilde{t}$ moves toward infinity, $\vec{z}$ is forced to take the projected value of $\vec{w}$; when $\tilde{t}$ becomes large enough, $\vec{z}$ might achieve Boolean value. 
\begin{equation}\label{equ. closed form solution for z}
    {\vec{z}}^{(k+1)} = (1+\tilde{t}^{(k)})^{-1}\bigg(\vec{w}+\tilde{t}^{(k)}\Pi_S(\vec{w})\bigg)
\end{equation}
\subsubsection{Step 2-Drive} The relaxation converts \eqref{equ. relaxed general optimization problem} into a convex problem by setting $\tilde{t}=0$, and the solutions are used as a warm start for the drive phase. It is proposed to adjust $\tilde{t}$ as \eqref{equ. updating t}, based on the primal and dual residuals defined in \eqref{equ. Primal and Dual residuals definition} \cite{Boyd2011Distributed}.
\begin{equation}\label{equ. updating t}
    \tilde{t}^{(k)} := \tilde{t}^{(k-1)} + c\;(1/r_p^{(k-1)}+1/r_d^{(k-1)})
\end{equation}
\begin{equation}\label{equ. Primal and Dual residuals definition}
    r_p^{(k)} = \|\vec{y}^{(k)}-\vec{z}^{(k)}\|_2, \; r_d^{(k)}=\rho\|\vec{z}^{(k)}-\vec{z}^{(k-1)}\|_2
\end{equation}\par
This ensures that the rate of changing $\tilde{t}$ corresponds to the stabilized convergence during the procedure. If residuals are small, the algorithm is stabilized and $\tilde{t}$ can be increased to further push binary variables toward Boolean values; otherwise, if residuals are large, more iterations are required and $\tilde{t}$ should not be boosted. After reaching specific iterations or residuals, the procedure might be completed by $\tilde{t} \to \infty$ or the projection of remained variables onto final Boolean values.\par
\subsubsection{Step 3-Polish} In order to verify the results, the polishing phase fixes the values of binary variables based on the results obtained from the previous phase, and solves the convex problem using the convex ADMM for the remaining variables.  \par 
To realize distributed framework and enhance the scalability, the proposed NC-ADMM method integrates the relax-drive-polish procedure with the consensus ADMM, as shown in \eqref{equ. relax-drive-polish with consensus ADMM for generalized optimization} for cluster $i$. The idea is to introduce and reparametrize consensus continuous variables as $\vec{\overline{x}}$ and consensus binary variables among each $B$ neighbor clusters. Then, $\vec{{z}}$ reappears as a consensus binary variable, being integrated with all binding and inner ones, which $B=1$ for inner variables.
\vspace{-0.4cm}
\begin{subequations} \label{equ. relax-drive-polish with consensus ADMM for generalized optimization}
 \begin{align}
 \begin{split}\label{equ. step 1 relax-drive-polish with consensus ADMM for generalized optimization}
 (\vec{x}_i,\vec{y}_i)^{(k+1)} &:= \argmin_{\vec{x_i},\vec{y_i} \in C_i,\: 0\leq\vec{y_i}\leq1} \bigg[ f_i(\vec{x}_i,\vec{y}_i)+\\ \frac{\rho}{2} \| \vec{x}_i^{Bi}-& \vec{\overline{x}}_i^{(k)} + \vec{\mathfrak{u}}_{1,i}^{(k)}\|_2^2+ \frac{\rho}{2} \| \vec{y}_i-\vec{z}_i^{(k)} +\vec{\mathfrak{u}}_{2,i}^{(k)}\|_2^2 \bigg]
 \end{split}{}\\
 \begin{split}\label{equ. step 2 relax-drive-polish with consensus ADMM for generalized optimization}
     \vec{\overline{x}}_i^{(k+1)}&:=\frac{1}{B} \sum_{i=1}^B(\vec{x}_i^{Bi^{(k+1)}}+\vec{\mathfrak{u}}_{1,i}^{(k)}) \\
     \vec{z}_i^{(k+1)}&:=\prox_{\tilde{t},I(\vec{z})}\bigg(\frac{1}{B} \sum_{i=1}^B(\vec{y}_i^{(k+1)}+\vec{\mathfrak{u}}_{2,i}^{(k)})\bigg)
 \end{split}{}\\
 \begin{split}\label{equ. step 3 relax-drive-polish with consensus ADMM for generalized optimization}
     \vec{\mathfrak{u}}_{1,i}^{(k+1)}&:=\vec{\mathfrak{u}}_{1,i}^{(k)}+\vec{x}_i^{Bi^{(k+1)}}-\vec{\overline{x}}_i^{(k+1)}\\
     \vec{\mathfrak{u}}_{2,i}^{(k+1)}&:=\vec{\mathfrak{u}}_{2,i}^{(k)}+\vec{y}_i^{(k+1)}-\vec{z}_i^{(k+1)}
 \end{split}
\end{align}
\end{subequations}

In order to calculate the residuals in the NC-ADMM method, \eqref{equ. Primal and Dual residuals definition} should incorporate the consensus continuous variables, as shown in \eqref{equ. main problem primal dual residuals}. The detailed procedure of the proposed NC-ADMM method is provided in Algorithm \ref{Alg. NC_ADMM}.
\begin{subequations}\label{equ. main problem primal dual residuals}
   \begin{equation}\label{equ. main problem primal residual}
       r_p^{(k)} = \|\vec{y}^{(k)}-\vec{z}^{(k)}\|_2 + \|\vec{x}^{(k)}-\vec{\overline{x}}^{(k)}\|_2
   \end{equation}
   \begin{equation}\label{equ. main problem dual residual}
       r_d^{(k)}=\rho\|\vec{z}^{(k)}-\vec{z}^{(k-1)}\|_2+\rho\|\vec{\overline{x}}^{(k)}-\vec{\overline{x}}^{(k-1)}\|_2
   \end{equation}
  \end{subequations}
  
\begin{algorithm}
 \caption{The proposed NC-ADMM method}
\begin{algorithmic}[1]
 \STATE \textbf{Initialization:} $ {\vec{x}}^{(0)}, {\vec{y}}^{(0)}, {\vec{z}}^{(0)}, {\vec{\overline{x}}}^{(0)}, {\vec{\mathfrak{u}}_1}^{(0)}, {\vec{\mathfrak{u}}_2}^{(0)} =0, 
 \newline k=0, \; \rho=1, \; \tilde{t}^{(0)}=0$
 \WHILE{(Not converged \hspace{-0.15cm} \OR \hspace{-0.15cm} $\tilde{t}\neq0$) \AND Not max iteration}
 \STATE $k \gets k+1$
 \IF{$\tilde{t}^{(k-1)}=0$}
 \IF{Not converged}
 \STATE $\tilde{t}^{(k)}=0$ \COMMENT{Continue the relax phase} 
 \ELSE
 \STATE \textbf{Update} $\tilde{t}^{(k)}$ using \eqref{equ. updating t} \COMMENT{Start the drive phase} 
 \STATE \COMMENT{Warm-start with results from the relax phase}
 \ENDIF
 \ELSIF{$\tilde{t}^{(k-1)} \neq 0$ \AND $k \leq$ Max Prox iterations}
 \STATE \textbf{Update} $\tilde{t}^{(k+1)}$ using \eqref{equ. updating t} \COMMENT{Continue the drive phase}
 \ELSE
 \STATE $\tilde{t}^{(k)} \to \infty$ \COMMENT{Start or continue the projection}
 \ENDIF
  \STATE \textbf{Update} $\vec{x}_i^{(k)}$ and $\vec{y}_i^{(k)}$ of each agent $i$ by solving local DSR problem of \eqref{equ. step 1 relax-drive-polish with consensus ADMM for generalized optimization} \label{AlgLine. Line Update x}
  \STATE \textbf{Broadcast messages} of $\vec{x}_i^{{Bi}^{(k)}}$ and  $\vec{y}_i^{{Bi}^{(k)}}$ by each agent $i$ to the neighboring agents and receive data from them
  \STATE \textbf{Update} consensus or relaxed binary variables of $\vec{\overline{x}}_i^{(k)}$ and $\vec{z}_i^{(k)}$ using exchanged messages by \eqref{equ. step 2 relax-drive-polish with consensus ADMM for generalized optimization}
  \STATE \textbf{Update} ${\vec{\mathfrak{u}}_1}^{(k)}$ and ${\vec{\mathfrak{u}}_2}^{(k)}$ using \eqref{equ. step 3 relax-drive-polish with consensus ADMM for generalized optimization} 
  \STATE \textbf{Calculate} residuals by \eqref{equ. main problem primal dual residuals}
  \STATE \textbf{Check} convergence by $r_p^{(k)}\leq \epsilon$ \AND $r_d^{(k)} \leq \epsilon$ \label{AlgLine. Line Check convergence}
  \ENDWHILE \\
  \COMMENT{Start the polish phase while Boolean variables are fixed} 
  \STATE \textbf{Initialize} all variables by previous stage 
  \WHILE{Not converged}
  \STATE $k \gets k+1$
  \STATE Repeat lines \ref{AlgLine. Line Update x}-\ref{AlgLine. Line Check convergence} (Boolean variables are restricted)
  \ENDWHILE
\end{algorithmic}
\label{Alg. NC_ADMM}
\end{algorithm}
\vspace{-0.6cm}
\subsection{Application of NC-ADMM to DDSR}
The DN is clustered based on the clustering strategy in Algorithm \ref{Alg. Bottom-up travers}, and the DDSR is solved using the NC-ADMM method in Algorithm \ref{Alg. NC_ADMM}. For each cluster $i$, decision variables of problem \eqref{equ. objective function} can be decomposed as $\vec{x}_i=[\vec{x}_{i,t}], \: \vec{y}_i=[\vec{y}_{i,t}], \: \vec{\overline{x}}_i=[\vec{\overline{x}}_{i,t}], \: \vec{z}_i=[\vec{z}_{i,t}], \: \forall i \in \mathcal{K}, t \in \mathcal{T}$, as defined in \eqref{equ. DSR variables adopted for NC-ADMM}. The continuous consensus variables of \eqref{equ. DSR variables adopted for NC-ADMM x_bar variable} are derived from the power flow equations, which each cluster also considers the voltage of boundary nodes and the power flow of joint lines from neighboring clusters.
\begin{subequations}\label{equ. DSR variables adopted for NC-ADMM}
\begin{equation}\label{equ. DSR variables adopted for NC-ADMM x variable}
    \vec{x}_{i,t} := \hspace{-0.1cm}[\vec{P}_{j,t}^{L,i},\vec{Q}_{j,t}^{L,i}, \vec{U}_{j,t}^{i}, \vec{P}_{jl,t}^i, \vec{Q}_{jl,t}^i, \vec{P}_{j,t}^{G,i}, \vec{Q}_{j,t}^{G,i}, \vec{Q}_{j,t}^{\text{cap},i} ]
\end{equation}
\begin{equation}\label{equ. DSR variables adopted for NC-ADMM y variable}
\vec{y}_{i,t}=\vec{z}_{i,t}:=[x_{j,t}^{L,i},x_{j,t}^{B,i}, \alpha_{jl,t}^{i}, \beta_{jl,t}^{i}]
\end{equation}
 \begin{equation}\label{equ. DSR variables adopted for NC-ADMM x_bar variable}
     \vec{\overline{x}}_{i,t}:=[\vec{U}_{j,t}^{i}, \vec{P}_{jl,t}^i, \vec{Q}_{jl,t}^i]
 \end{equation}
\end{subequations}

Each SLA$_i$ solves a subproblem of \eqref{equ. step 1 relax-drive-polish with consensus ADMM for generalized optimization}, in which the related convex set of $C_i$ is defined by \eqref{equ. DSR non-dispatchable load pickup}-\eqref{equ. Spanning Tree constraints} within each cluster. Then, as shown in Fig. \ref{Fig. Data exchange}, clusters exchange consensus variables and Lagrange multipliers consecutively to update them as \eqref{equ. step 2 relax-drive-polish with consensus ADMM for generalized optimization} and \eqref{equ. step 3 relax-drive-polish with consensus ADMM for generalized optimization}. The exchanged data include 1) all binding continuous variables and related Lagrange multipliers; and 2) consensus binary variables, such as the status of switchable joint lines, their related auxiliary binary variables, and their associated Lagrange multipliers. In this process, all SLAs need to comply with the proposed NC-ADMM algorithm in each of the relax-drive-polish phases during convergence.
\begin{remark}\label{Remark binding binary varibles.}
$\alpha_{jl,t}^{i}$ and $\beta_{jl,t}^{i}$ are the only common consensus binary variables for the joint distribution lines, from power flow equation \eqref{equ. DSR Power flow equations} and spanning tree constraints \eqref{equ. Spanning Tree constraints}.
\end{remark}
\begin{definition}\label{Def. parent and chldren clusters}
Parent cluster is defined as the unique closest neighboring cluster to the substation or the root node, while the others are defined as children clusters. 
\end{definition}
\begin{figure}
\centering
	\includegraphics[width=2.8in]{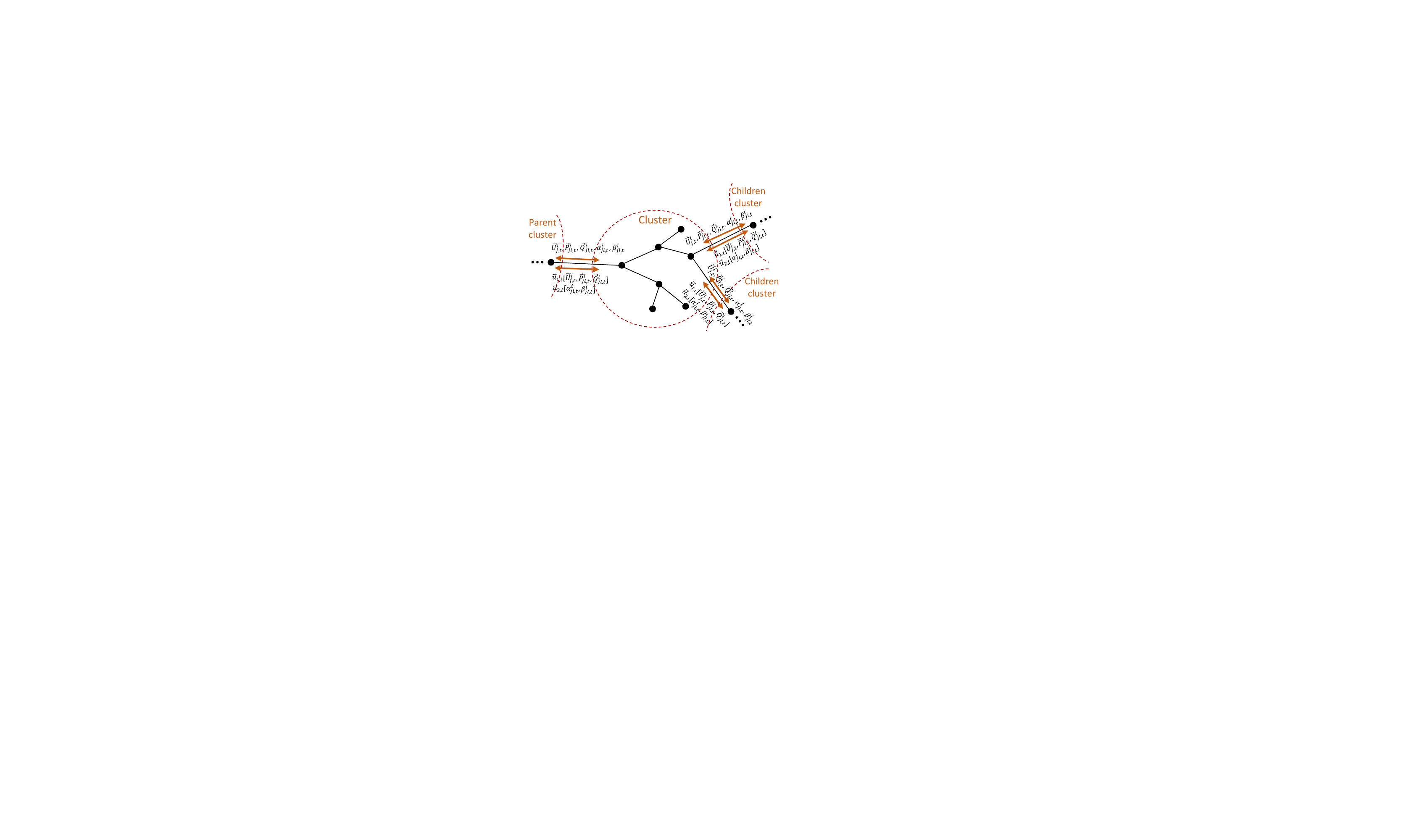}
	\caption{Data exchange for updating consensus variables.}
	\label{Fig. Data exchange}
\end{figure}
\vspace{-0.5cm}
\section{Numerical Results}\label{numerical results}
The performance and scalability of NC-ADMM-based DDSR are demonstrated though testing on IEEE123-node and IEEE 8500-node test feeders. Each time step is assumed to be 15 minutes (1 p.u.) for operation and stabilization \cite{Roofegari2019Distributed}. Simulations are implemented on Python, using Gurobi as solver and OpenDSS through COM interface as data provider.\par
\vspace{-0.5cm}

\subsection{NC-ADMM-based DDSR for Unbalanced Network}
IEEE 123-node DN is modified by connecting two inverter-based PV units at nodes 66 and 105, each with the maximum capacity of 300kW. Loads at nodes 48 and 65 have higher priority, and loads at nodes 47 and 76 are dispatchable. There are total 3 time steps following transmission restoration, which reflect the gradually increasing generation capabilities of the bulk system as [400kW, 1400kW, 3500kW] \cite{Golshani2018Coordination}. It is assumed that all loads have a constant power factor of 0.9, and the nodal voltage is limited within 0.95 pu and 1.05 pu. 

Based on the proposed clustering strategy, the DN is divided into 11 clusters, as shown in Fig. \ref{Fig. DDSR operation scenario 1, IEEE 123-node}. Two outage scenarios are considered in which scenario 1 represents a restoration after blackout and scenario 2 combines scenario 1 with emergency restoration. The ADMM parameter $\rho$ is set as 1.\par

\subsubsection{Scenario 1}
It shows how the proposed method deals with a network with outages after a blackout. The provided power through substation gradually increases, and PV generators are operated to pick up more loads. Fig. \ref{Fig. DDSR operation scenario 1, IEEE 123-node} shows the status of switching operation and load pickup for the first time step of restoration. All energized and de-energized loads are shown by green and red dots, respectively. Blue downside and yellow upside arrows show closed and open switch during restoration. As there is not any faulted line in the network, all normally-open switches remain open to prevent any loop during the network operation. Clearly, all high priority loads have been picked up during the first time step of DDSR.\par  

\begin{figure}
\centering
	\includegraphics[width=3.2in]{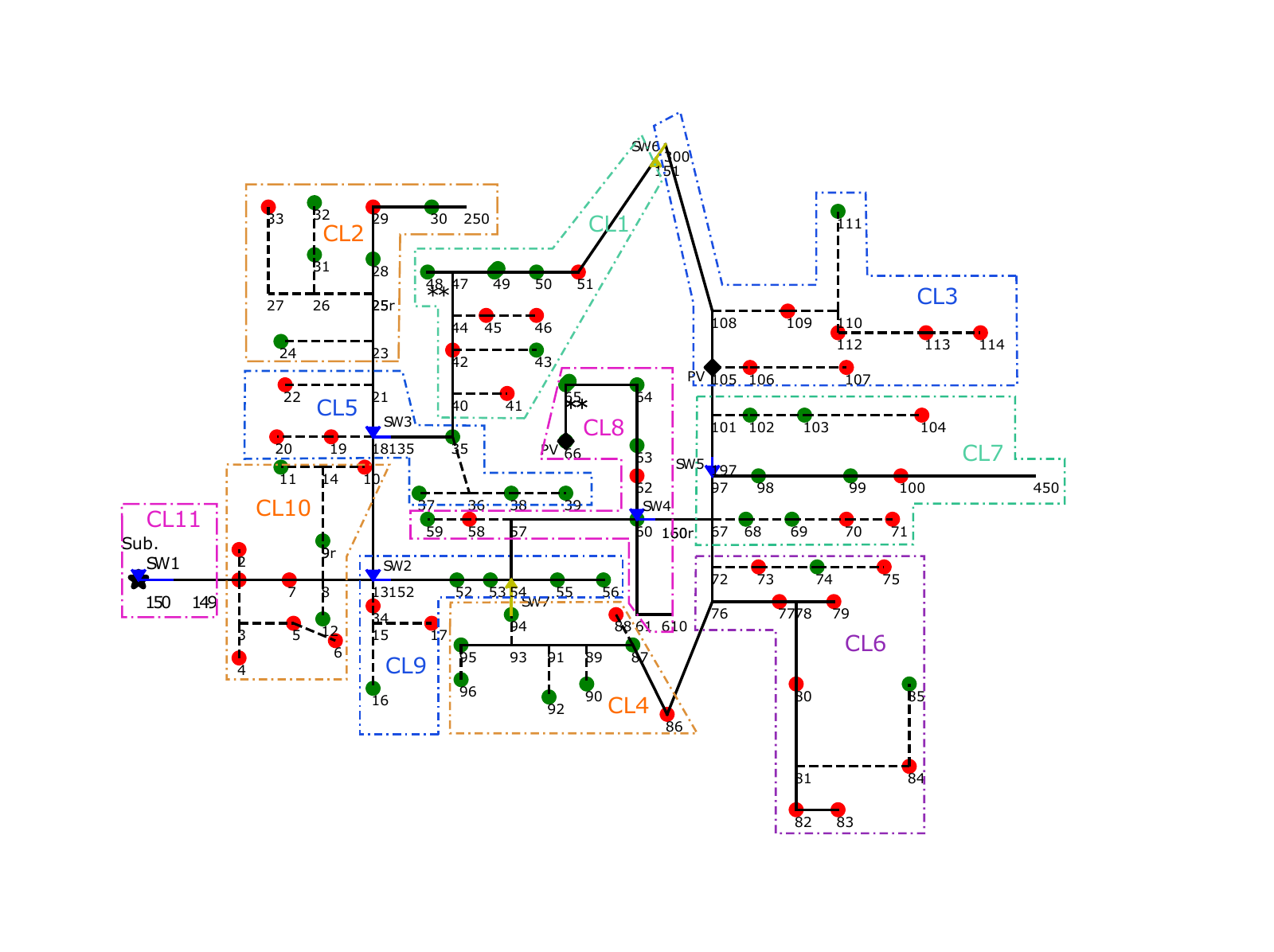}
	\caption{Clustering results and DDSR operation in first time step of scenario 1 for IEEE 123-node DN without faulted line.}
	\label{Fig. DDSR operation scenario 1, IEEE 123-node}
\end{figure}

\subsubsection{Scenario 2}
Considering previous scenario, there is also a faulted line between nodes 72 and 76, which is isolated through opening switches 4, 5 and 7 as shown in Fig. \ref{Fig. DDSR operation scenario 2, IEEE 123-node}. PV generators are dispatched to provide more power for the load pickup. Due to space limitations, it only shows the first time step of DDSR in which all high priority loads have been picked up. Furthermore, the normally-open switch 6 is closed to provide power to the unfaulted out-of-service area without any loop in operation.\par

\begin{figure}
\centering
	\includegraphics[width=3.2in]{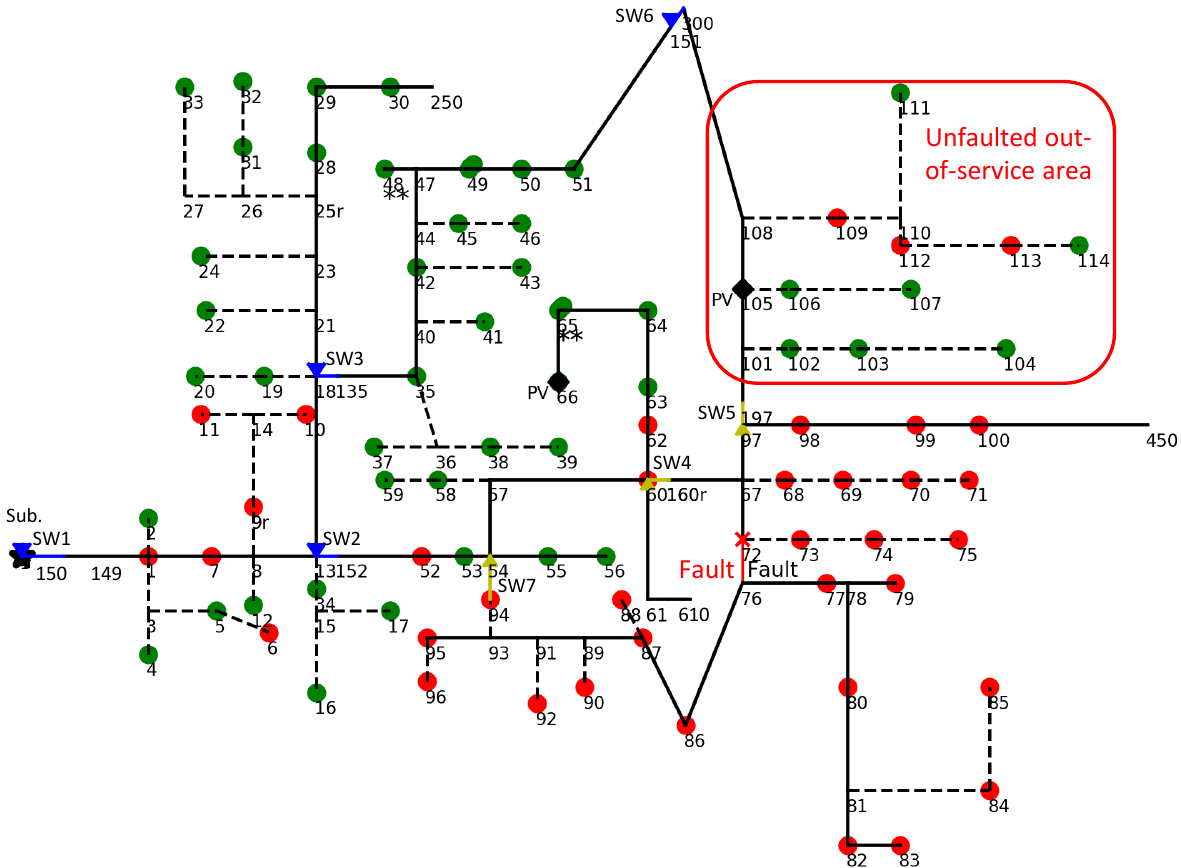}
	\caption{DDSR operation in first time step of scenario 2 for IEEE 123-node DN with a faulted line.}
	\label{Fig. DDSR operation scenario 2, IEEE 123-node}
\end{figure}
\vspace{-0.4cm}
\subsection{Performance of NC-ADMM Method}
The convergence of the NC-ADMM-based DDSR, in terms of total restored loads for each cluster in scenario 1, is shown in Fig. \ref{Fig. clusters load convergence SC1}. The stopping criteria for drive and polish phases are the primal and dual residuals less than $\sqrt{\text{\# of Agents}}\times10^{-4}$, which is increased to be 10 times larger for relax phase. It shows that, 1) during the relax phase, each SLA picks all of its loads; 2) by exchanging data among clusters and adjusting Lagrange multipliers, power flow equations start to affect, and the amount of load pickup drops following the total available power; 3) during the drive phase, the load pickup amount are driven toward Boolean values; 4) at the end of drive phase, by $\tilde{t} \to \infty$, all remaining binary variables are projected to the nearest Boolean values; 5) during the polish phase, binary variables are fixed and continuous variables such as dispatchable loads converge to better optimal values.\par 

\begin{figure}
\centering
	\includegraphics[width=3.2in]{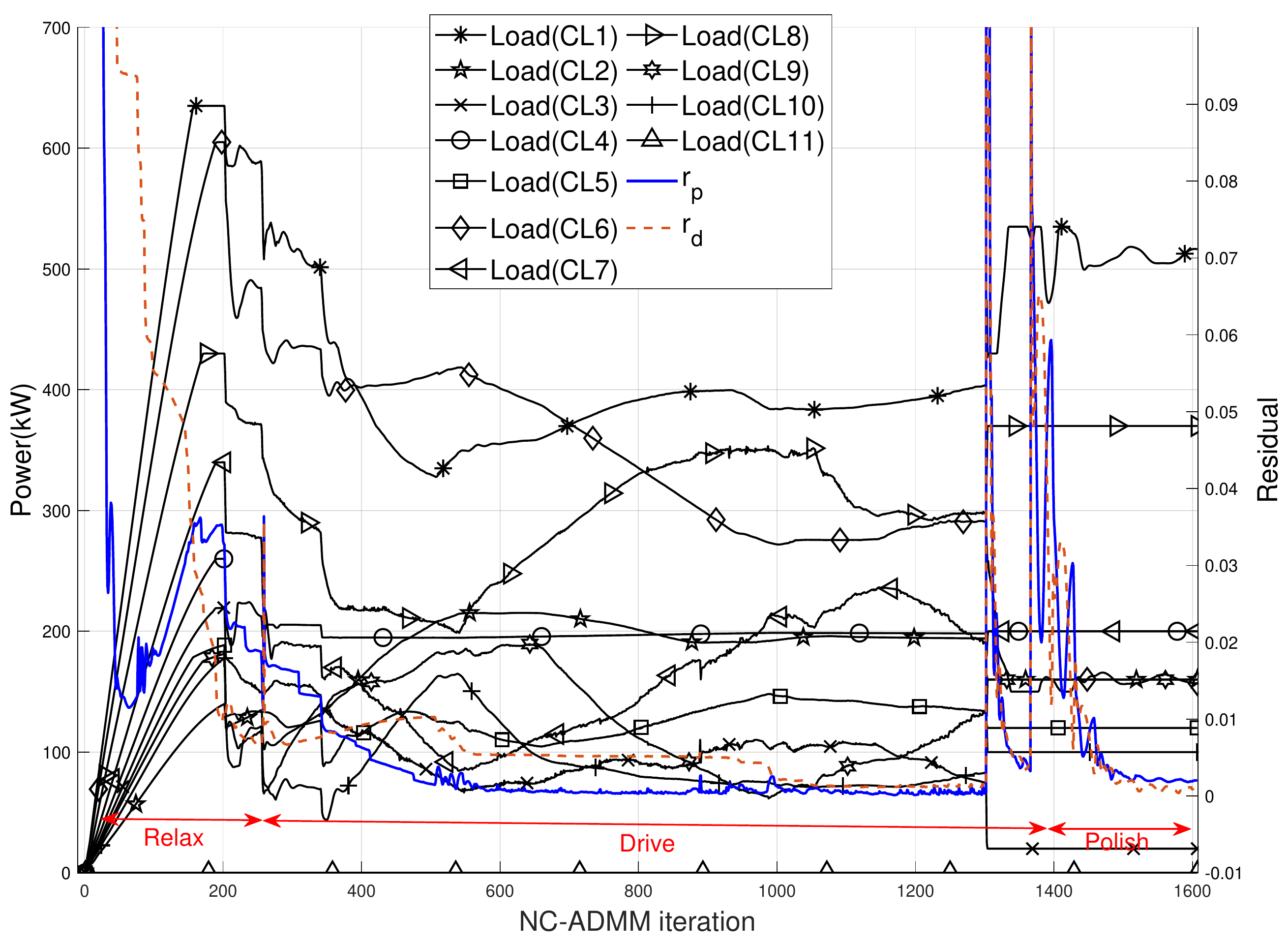}
	\caption{Total load convergence of each cluster at first time step and primal and dual residuals in scenario 1.}
	\label{Fig. clusters load convergence SC1}
\end{figure}

Fig. \ref{Fig. Switching convergence SC1} shows the convergence of switching status, which are correctly converged during the relax phase. During the drive phase, despite stimulation of changing and inspecting other values, they are reverted immediately since the initial values are optimal, as all switches remain their original status. It also shows the evolution of $\tilde{t}$ during the convergence, as it equals to zero during the relax phase, and constantly increases to larger values according to \eqref{equ. updating t}, by setting $c=10^{-1}$.\par
\begin{figure}
\centering
	\includegraphics[width=3.2in]{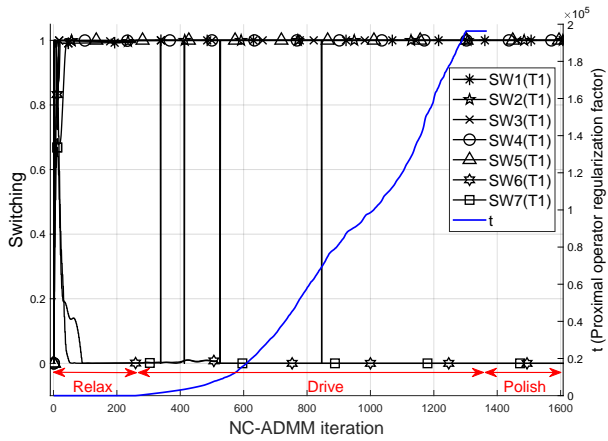}
	\caption{Switching status and proximal operator regularization factor convergences in scenario 1.}
	\label{Fig. Switching convergence SC1}
\end{figure}

Fig. \ref{Fig. Projected Residuals SC1} shows the residuals using the conventional heuristic projection method in \eqref{equ. generalized projection Non-convex ADMM}, and the comparison with the proposed NC-ADMM method is shown in Fig. \ref{Fig. Restored Loads scenario 1, IEEE 123-node}. It is shown that the conventional method of \eqref{equ. generalized projection Non-convex ADMM} oscillates even after a large number of iterations. If the algorithm halted, the final values of conventional method \eqref{equ. generalized projection Non-convex ADMM} are usually infeasible, as the second time step in Fig. \ref{Fig. Restored Loads scenario 1, IEEE 123-node}, due to the false switching operation.\par

\begin{figure}
\centering
	\includegraphics[width=2.6in]{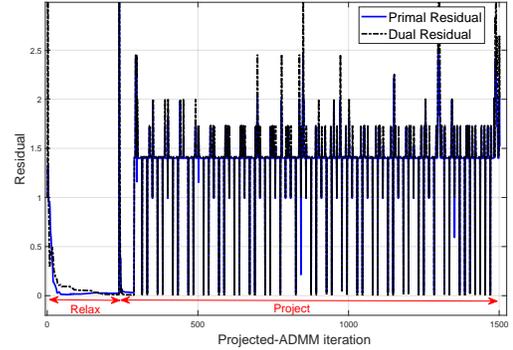}
	\caption{Primal and dual residuals of the projection method \eqref{equ. generalized projection Non-convex ADMM}.}
	\label{Fig. Projected Residuals SC1}
\end{figure}

\begin{figure}
\centering
	\includegraphics[width=2.6in]{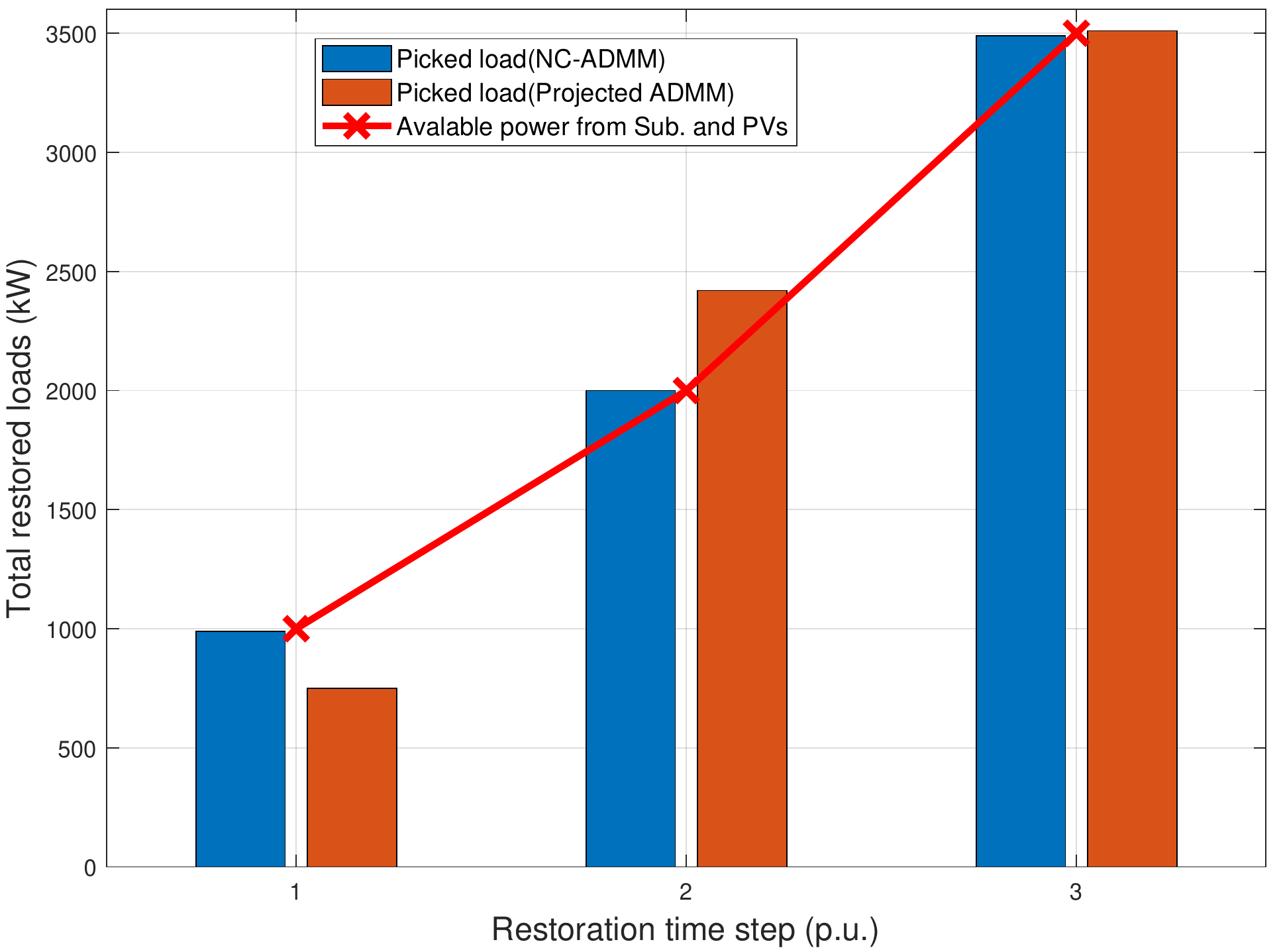}
	\caption{Comparison of total restored loads in scenario 1 between the NC-ADMM method and the method \eqref{equ. generalized projection Non-convex ADMM}.}
	\label{Fig. Restored Loads scenario 1, IEEE 123-node}
\end{figure}
\begin{figure}[!]
\centering
	\includegraphics[width=3.2in]{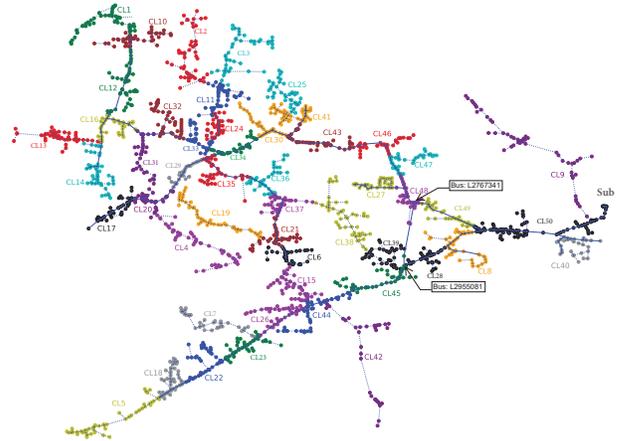}
	\caption{Clustering of IEEE 8500-node DN for DDSR.}
	\label{Fig. clustering IEEE 8500-Node}
\end{figure}

\vspace{-0.3cm}
\subsection{Scalability of the NC-ADMM-based DDSR}
IEEE 8500-node network is integrated with two 1,000 kW PV generators on buses `1026706' and `1047592'. The circuit consists of 35 normally-closed switches and 5 tie-switches, and a long switchable tie-line is added between buses `L2767341' and `L2955081'. This network consists of 2,522 primary buses while the secondary networks and loads are aggregated into the related secondary transformer buses. The first 100 loads in the related load document are considered as dispatchable, and the rest are binary-valued. There are total 3 restoration steps with the power capacity of [600kW, 6,000kW, 12,000kW] following transmission restoration. \par

Using the proposed clustering strategy, the network is divided into 50 clusters as shown in Fig. \ref{Fig. clustering IEEE 8500-Node}. A blackout restoration is assumed with a faulted line between buses `M1125934' and `L2730163', which is isolated through sectionalizing switches `A8645\_48332\_sw' and `A8611\_48332\_sw'. Fig. \ref{Fig. DDSR operation, IEEE 8500-node} shows the last time step of restoration. The long tie-line switch is closed to provide power for the out-of-service area after the faulted line. It is clear that all other tie-switches are open as shown by a yellow arrow, to prevent any loop in the operation. Furthermore, the convergence of the proposed method, in terms of total restored loads at each time step, is shown in Fig \ref{Fig. Total load convergence for 8500-node}. Similarly, SLAs solve the problem in each phase and exchange data among each other, until the related Lagrange multipliers for the power balance affected and the total available power adjusted during the drive phase. During the polish phase, all binary variables are fixed to polish the results toward high-quality solution for the DDSR problem.   \par

\begin{figure}
\centering
	\includegraphics[width=3.1in]{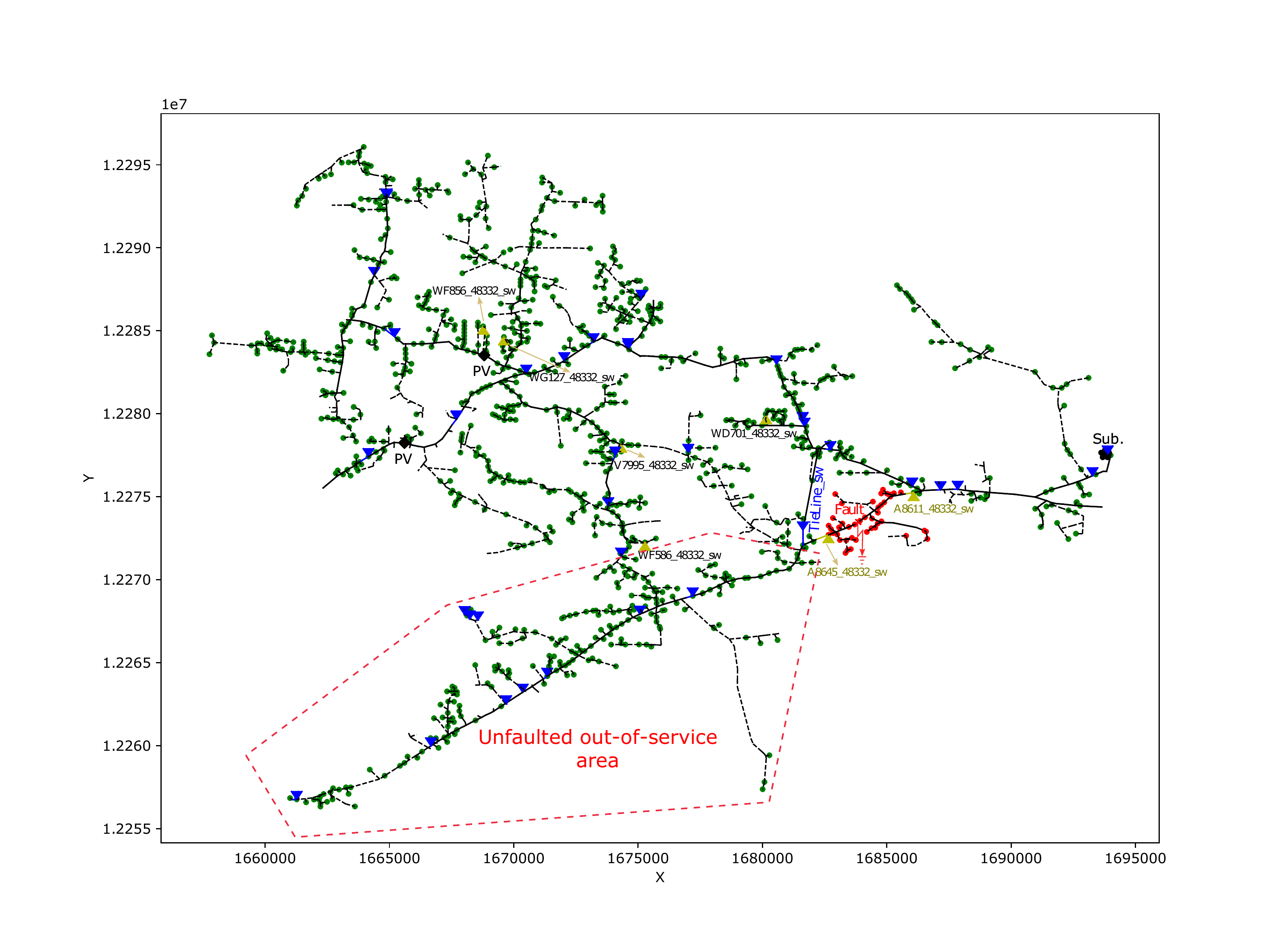}
	\caption{DDSR operation in last time step for IEEE 8500-node DN with a faulted line.}
	\label{Fig. DDSR operation, IEEE 8500-node}
\end{figure}

\begin{figure}
\centering
	\includegraphics[width=3in]{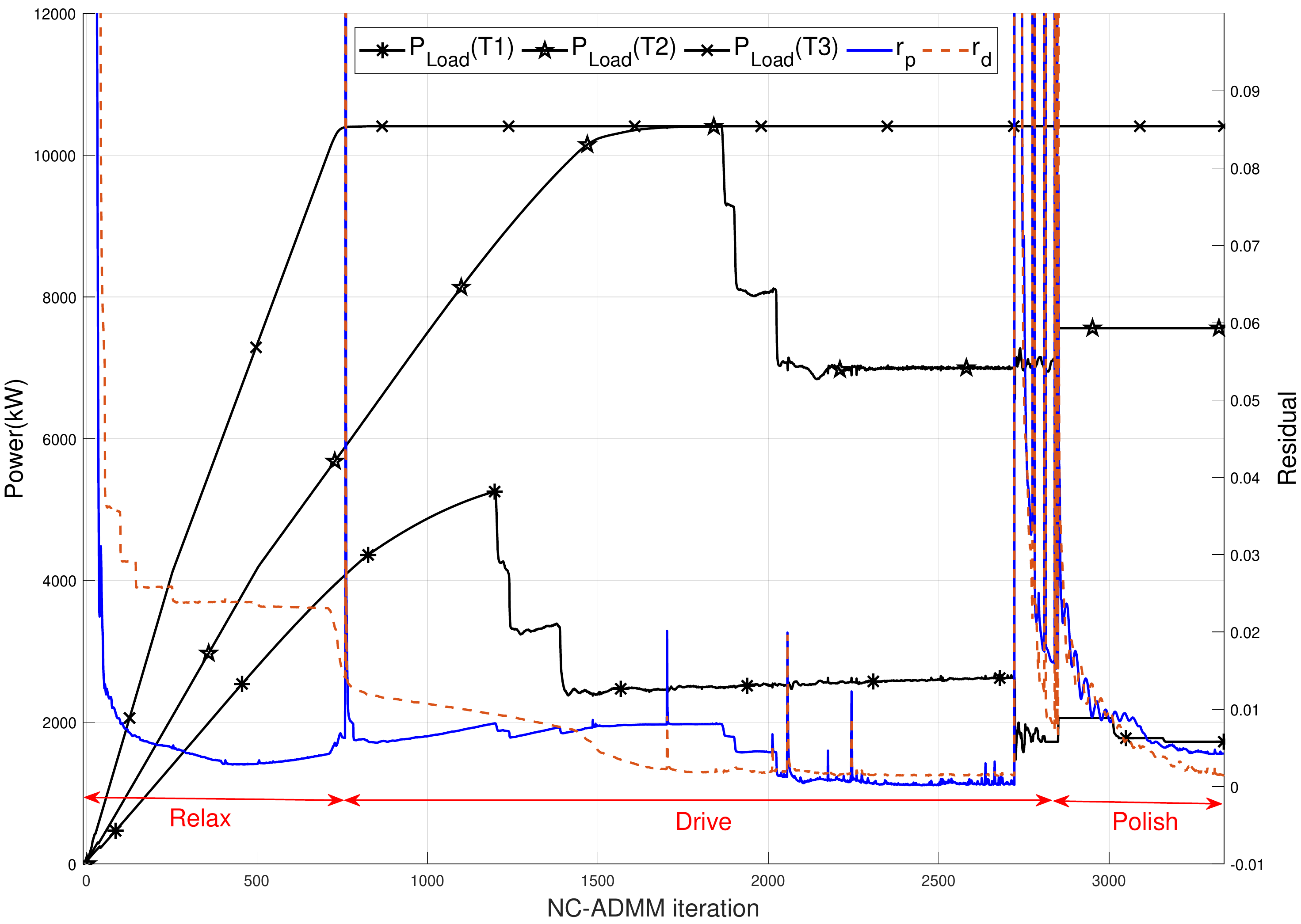}
	\caption{Total load convergence for each time step and residuals for IEEE 8500-node DN.}
	\label{Fig. Total load convergence for 8500-node}
\end{figure}

\section{Conclusion}
Service restoration can be formulated as a challenging mixed-integer nonlinear programming problem, which deals with many binary variables representing loads and switching operation. In this paper, a non-convex ADMM-based distributed optimization method is developed and applied to the service restoration problem in large-scale active distribution networks. The developed heuristic ADMM-based algorithm is incorporated with consensus ADMM to provide a fully distributed cluster-based framework. Moreover, an adaptive autonomous clustering strategy is developed for application in large-scale networks, in which each cluster consists of a smart agent to carry out the distributed restoration procedure with limited data exchange with neighbors. Simulation results on large-scale IEEE test networks demonstrate the capability of the distributed restoration to deal with various blackout or emergency restoration problems, and also the superiority of the distributed non-convex method over simple projection methods. In future work, the proposed NC-ADMM method can be further analyzed in terms of the evolution of parameter $t$ during the drive phase to achieve an even better solution.\par 


%





\ifCLASSOPTIONcaptionsoff
  \newpage
\fi



%

\bibliographystyle{IEEEtran}
\bibliography{mybib}

%








\end{document}